\newcommand\mynobreakpar{\par\nobreak\@afterheading} 
\newcommand\tikznode[3][]%
\tikzset{
	-|-/.style={
		to path={
			(\tikztostart) -| ($(\tikztostart)!#1!(\tikztotarget)$) |- (\tikztotarget)
			\tikztonodes
		}
	},
	-|-/.default=0.5,
	|-|/.style={
		to path={
			(\tikztostart) |- ($(\tikztostart)!#1!(\tikztotarget)$) -| (\tikztotarget)
			\tikztonodes
		}
	},
	|-|/.default=0.5,
}
\tikzstyle{block} = [draw=black,fill=white,rectangle,thick,minimum height=2em,minimum width=4em, align=center]
\tikzstyle{sum} = [draw,circle,inner sep=0mm,minimum size=2mm]
\tikzstyle{connector} = [->,thick]
\tikzstyle{dashedconnector} = [->,thick,dashed]
\tikzstyle{line} = [thick]
\tikzstyle{dashedline} = [thick,dashed]
\tikzstyle{branch} = [circle,inner sep=0pt,minimum size=1mm,fill=black,draw=black]
\tikzstyle{guide} = []
\tikzstyle{snakeline} = [connector, decorate, decoration={pre length=0.2cm,
\newcommand{\gettikzxy}[3]{%
	\tikz@scan@one@point\pgfutil@firstofone#1\relax
	\edef#2{\the\pgf@x}%
	\edef#3{\the\pgf@y}%
}
		\pgfmathsetlength\pgfutil@tempdima{\pgfkeysvalueof{/pgf/parallelepiped
				offset x}}
		\pgfmathsetlength\pgfutil@tempdimb{\pgfkeysvalueof{/pgf/parallelepiped
				offset y}}
		\def\ppd@offset{\pgfpoint{\pgfutil@tempdima}{\pgfutil@tempdimb}}
\tikzstyle{server}=[
\tikzstyle{ports}=[
\newcommand*\colourcheck[1]{%
	\expandafter\newcommand\csname #1check\endcsname{\textcolor{#1}{\ding{52}}}%
}
\newcommand*\colourcross[1]{%
	\expandafter\newcommand\csname #1cross\endcsname{\textcolor{#1}{\ding{56}}}%
}
\pgfplotsset{compat=newest}
\pgfplotsset{plot coordinates/math parser=false}
\newlength\figureheight
\newlength\figurewidth 
\newcommand{\N}{\mathbb{N}} 
\newcommand{\Z}{\mathbb{Z}}
\newcommand{\Abc}{\boldsymbol{\mathcal{A}}}
\newcommand{\AbcTilde}{\boldsymbol{\mathcal{\tilde{A}}}}
\newcommand{\Bbc}{\boldsymbol{\mathcal{B}}}
\newcommand{\Cbc}{\boldsymbol{\mathcal{C}}}
\newcommand{\Dbc}{\boldsymbol{\mathcal{D}}}
\newcommand{\modq}{\bmod{q}}
\definecolor{istblue}{rgb/cmyk}{0.0,0.25490,0.56862745/1,0.55,0.0,0.43}%
\definecolor{istgreen}{rgb/cmyk}{0.388235,0.83137,0.4431372/0.53,0,0.46,0.16}%
\definecolor{istorange}{rgb/cmyk}{1.0,0.5333,0.0667/0.0,0.46,0.93,0.0}%
\definecolor{istred}{rgb/cmyk}{0.9961,0.2902,0.2863/0.0,0.7,0.71,0.0}%
\definecolor{istlightblue}{rgb/cmyk}{0.3765, 0.6863, 1.0/0.62,0.31,0.0}%
\definecolor{istdarkblue}{rgb/cmyk}{0.1176,0.1804,0.8706/0.86,0.79,0.0,0.12}%
\definecolor{istdarkgreen}{rgb/cmyk}{0.0627,0.5882,0.2824/0.89,0.0,0.52,0.41}%
\definecolor{istdarkred}{rgb/cmyk}{0.529,0.031,0.075/0,0.94,0.86,0.47}%
\definecolor{istlogoblue}{rgb/cmyk/gray}{0,0,0.804/1,1,0,0.2/0}%
\definecolor{unianthrazit}{rgb/cmyk}{0.24314,0.26667,0.29804/0.5,0.35,0.25,0.70}%
\definecolor{unimiddleblue}{rgb/cmyk}{0,0.31765,0.61961/1,0.7,0,0}%
\definecolor{unilightblue}{rgb/cmyk}{0,0.74510,1/0.7,0,0,0}%
\newcommand{\diagdots}[3][-25]{%
	\rotatebox{#1}{\makebox[0pt]{\makebox[#2]{\xleaders\hbox{$\cdot$\hskip#3}\hfill\kern0pt}}}%
}
\newtheorem{definition}{Definition}
\newtheorem{theorem}{Theorem}
\newtheorem{lemma}{Lemma}
\renewcommand{\tilde}[1]{\widetilde{#1}}
\begin{document}

  \author*[1]{Sebastian Schlor}
  \author[2]{Frank Allgöwer}
  \runningauthor{Schlor and Allgöwer}
  \affil[1]{S.\ Schlor and F.\ Allgöwer are with the University of Stuttgart, Institute for Systems Theory and Automatic Control, Germany.  {\tt\small \{schlor, allgower\}@ist.uni-stuttgart.de}.}
  \title{Comparison and performance analysis of dynamic encrypted control approaches}
  \runningtitle{Comparison and performance analysis of dynamic encrypted control approaches}
  \abstract{
  	Encrypted controllers using homomorphic encryption have proven to guarantee the privacy of measurement and control signals, as well as system and controller parameters, while regulating the system as intended. However, encrypting dynamic controllers has remained a challenge due to growing noise and overflow issues in the encoding. 
  	In this paper, we review recent approaches to dynamic encrypted control, such as bootstrapping, periodic resets of the controller state, integer reformulations, and FIR controllers, and equip them with a stability and performance analysis to evaluate their suitability. 
  	We complement the analysis with a numerical performance comparison on a benchmark system.
  }
  \keywords{encrypted control, performance analysis}
  \startpage{1}
  \aop

\maketitle

\section{Introduction}

As modern production systems become more interconnected, with remote monitoring access and cloud-based networked-control approaches, privacy and security aspects become increasingly important for the operators.
In cases where data is stored and processed on external servers, encrypted computations can provide privacy from data breaches while providing functionality by operating with private data.
Thanks to modern homomorphic encryption (HE) schemes such as the Cheon-Kim-Kim-Song (CKKS) cryptosystem~\cite{Cheon2017}, computations on encrypted data can be done without giving the cloud provider access to the decryption key.
Since the first work on encrypted control~\cite{Kogiso2015}, solutions to many relevant problems such as encrypted static state feedback, encrypted data-driven control~\cite{Alexandru}, and encrypted PID tuning~\cite{Schluter2022a} have been proposed. For a detailed summary of the recent advances, see~\cite{Schlueter2023}.
Dynamic encrypted controllers, however, have posed a challenge because typical HE schemes only allow for a limited number of multiplications, whereas in dynamic controllers, the state is updated in every time step, typically involving a multiplication. Thus, by default, these encrypted dynamical systems can only be updated for a finite number of times.
The cause for this limited number of possible multiplications is twofold: If the cryptosystem relies on added noise for security, repeated multiplications lead to noise growth, and eventually, failing decryption. Further, fixed-point number representations require a scaling factor, which is amplified by repeated multiplication, leading to overflows and incorrectly represented numbers.

\subsection{Contribution}
In the recent years, several different solution approaches for dynamic encrypted control have been presented, involving very different concepts from control theory and cryptography.
Here, we summarize these approaches, present a performance analysis for the closed loop, and discuss their practical implementation.
In particular, we make the following contributions:
\begin{itemize}
		\item We give an overview of the approaches to dynamic encrypted control and review the recent literature.
		\item We provide a general stability and performance analysis for the presented techniques.
		\item We compare and unify the analysis of reset, finite-impulse-response (FIR) and bootstrapping controllers for a better understanding of the underlying errors made.
		\item We present a new integer reformulation by feedback in observer form.
\end{itemize}%
With this, we provide a basis for further research in the area of encrypted dynamic control.

\subsection{Outline}
The paper is structured as follows. 
In Section~\ref{sec:problem}, we precisely formulate the considered problem. 
Section~\ref{sec:review} presents the currently available approaches to dynamic encrypted control and their advantages and disadvantages. 
In Section~\ref{sec:analysis}, a stability and performance analysis of the reviewed approaches is provided.
We present a numerical comparison in Section~\ref{sec:numerics} and conclude the paper in
Section~\ref{sec:Summary}.

\section{Problem formulation}\label{sec:problem}

In this section, we describe the considered dynamical system, the dynamic controller and their feedback interconnection.
As a baseline we take the unencrypted controller to which we compare the encrypted control approaches later.

\subsection{System description}

We consider a discrete-time, linear, time-invariant system, represented by
\begin{equation}
	\begin{aligned}
	x(t+1) &= A x(t) + Bu(t) + B_1 w_{p_1}(t)\\
	y(t) &= C x(t) + F_1 w_{p_1}(t)\\
	z_p(t) &= C_1x(t) + Eu(t) + D_1w_{p_1}(t)
\end{aligned}\label{eq:plant}
\end{equation}
with time-index $t\in\N$, initial condition $x(0)=x_0$, control input $u$, performance input $w_{p_1}$, measurement output $y$, and performance output $z_p$.
For this system, we consider a predefined dynamic output feedback controller 
\begin{equation}
	\begin{aligned}
	x_c(t+1) &= A_c x_c(t) + B_c y(t) + B_2w_{p_2}(t)\\
	u(t) &= C_c x_c(t) + D_c y(t) + F_2 w_{p_2}(t)
\end{aligned}\label{eq:controller}
\end{equation}
with initial state $x_c(0)=x_{c,0}$, performance input $w_{p_2}$.
This controller emerges from a standard controller, e.g., LQG or $\mathcal{H}_\infty$-design, where an additional performance input $w_{p_2}$ is introduced.
This provides us later with the opportunity to study the influence of other errors acting on the controller state and input due to quantization and cryptographic noise, which are neglected at first. 
In the following, we use $w_{p} = \begin{pmatrix}w_{p_1}^\top &w_{p_2}^\top\end{pmatrix}^\top$ as the combined performance input.
The interconnection of the system and the controller yields the closed loop depicted in Figure~\ref{fig:blocks}.
As depicted by the padlocks, the measurement signal is encrypted before it is sent to the controller in the cloud, and the control input signal is only decrypted at the plant before it is applied at the actuator.

\begin{figure}[t]
	
	\centering
	\begin{tikzpicture}[scale=1, auto, >=stealth']
		
		\colorlet{effectsColor}{orange}
		
		\def\blockHeight{2\baselineskip}
		\node[block, minimum height=\blockHeight]  at (0,0)  (plant) {\parbox{8em}{\hfill $x$\\}};
		\node[]  at (plant)  (plantText) {\centering Plant};
		\node[block, minimum height=\blockHeight, below = 2.5\baselineskip of plant] (controller) {\parbox{8em}{\hfill $x_c$\\}};
		\node[]  at (controller)  (controllerText) {\centering Controller};

		\node[guide, left=0.5cm of plant] (guideLeft) {};
		\node[guide, right=0.5cm of plant] (guideRight) {};
		\node[guide, left=2cm of plant] (guideLeftLeft) {};
		\node[guide, right=2cm of plant] (guideRightRight) {};
		
		\draw[connector]  ($(guideLeftLeft)+(0,+0.6\baselineskip)$) -- node[pos=0.1]{$w_{p_1}$} ($(plant.west)+(0,+0.6\baselineskip)$);
		\draw[connector]  ($(plant.east)+(0,+0.6\baselineskip)$) -- node[pos=0.9]{$z_p$}  ($(guideRightRight)+(0,+0.6\baselineskip)$);

		\draw[connector]  ($(controller.west)+(0,+0\baselineskip)$) -| node[pos=0.75,left]{$u$} ($(plant.west)+(-0.9cm,-0.6\baselineskip)$) -- node[effectsColor, pos=0.15,anchor=mid,yshift=-0.3\baselineskip] (unlock) {\Huge\faUnlock} ($(plant.west)+(0,-0.6\baselineskip)$);
		\draw[connector]  ($(plant.east)+(0,-0.6\baselineskip)$) -- node[effectsColor, pos=1,anchor=mid,yshift=-0.3\baselineskip] (lock) {\Huge\faLock}  ($(plant.east)+(0.9cm,-0.6\baselineskip)$) |-  node[pos=0.25,right]{$y$} ($(controller.east)+(0,+0.6\baselineskip)$);

		\draw[connector]  ($(controller.east)+(2cm,-0.6\baselineskip)$) 
		-- 
		node[pos=0.1,above]{$w_{p_2}$} ($(controller.east)+(0,-0.6\baselineskip)$);

		\begin{scope}[on behind layer]
			\node[cloud, cloud puffs=17, cloud ignores aspect, align=center, draw, fill = white!80!gray,fit=(controller),inner sep=0.9em] (Controller2){};
		\end{scope}
		
	\end{tikzpicture}
	\caption{Block diagram of the dynamic encrypted control system with performance channels.}\label{fig:blocks}
	
\end{figure}
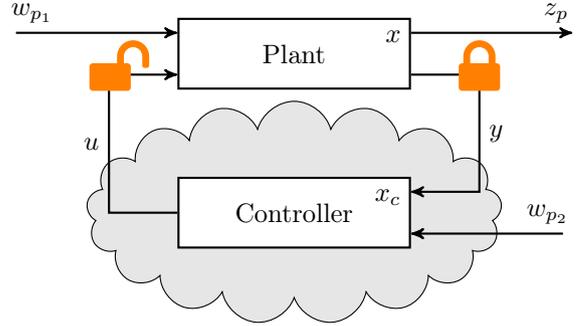

\subsection{Performance goal}
We consider performance as in the following definition.
\begin{definition}
	The system satisfies quadratic performance specified by $P_p$ if it is asymptotically stable for $w_p=0$ and there exists $\epsilon>0$ such that
	\begin{align*}
		\sum_{t=0}^{\infty}
		\begin{pmatrix} 
			w_p(t) \\ z_p(t)
		\end{pmatrix}^\top
		P_p
		\begin{pmatrix} 
			w_p(t) \\ z_p(t)
		\end{pmatrix} &\leq -\epsilon \sum_{t=0}^{\infty}w_p(t)^\top w_p(t)
	\end{align*}
	for $\xi(0)=0$ and all $w_p\in\ell_2$, where $\ell_2$ denotes the space of all square-summable signals. 
\end{definition}
This includes the $\ell_2$-gain, among other common performance specifications.

In the following sections, we describe the existing dynamic encrypted control approaches, followed by a stability and performance analysis based on this definition.
The overall goal is to obtain a stability and performance analysis for all approaches with which the approaches can be compared to the baseline performance.

\section{Dynamic encrypted control approaches}\label{sec:review}

In this section, we review existing approaches from the literature and group them by their methodology in three categories. The first one resolves the problem of a limited number of encrypted multiplications by also adapting the controller to a finite number of multiplications.
In the second category, we describe how the cryptosystem can be modified to enable unlimited multiplications, and in the third category, we group approaches to resolve the problem of the growing scaling factor by integer reformulations that do not need the scaling factor.

\subsection{Limiting the number of multiplications}
In this category, we group three approaches for dynamic encrypted control that modify the dynamic control system such that the state only experiences a finite number of encrypted multiplications before it is re-encrypted or discarded.

\subsubsection{State refreshment by re-encryption}\label{sec:re-enc}
The first proposed approach~\cite{Kogiso2015} for encrypted dynamic control works by not only sending the control input but also the updated state to the plant. The plant decrypts both and uses the control input to regulate the process. The decrypted state is rounded to the required precision and encrypted again. This re-encryption resets the scaling factor and any cryptographic noise in the ciphertext. Together with the sensor measurements, the newly encrypted controller state is sent to the controller at the next time-step, which uses it to replace its previously updated state. As depicted in Figure~\ref{fig:re-encryption}, essentially, the dynamic controller with input $y$ and output $u$ is replaced by a static controller with input $y$ and $x_c$ and output $u$ and the updated controller state, here denoted by $x_c^+$.

While this method is straightforward and easy to implement, it requires the additional communication between controller and plant in both ways, and the additional decryption and encryption by the plant. Of course, this re-encryption could also be done by a trusted third-party, requiring this additional assumption.

The advantages and disadvantages of state refreshment by re-encryption are that\mynobreakpar
\begin{itemize}
	\itemindent=13pt
	\item[\istgreencheck] the exact controller is applied,
	 \item[\istredcross] additional communication is needed.
\end{itemize}

\def\lineShift{0.9cm}
\def\lineShiftDouble{1.5cm}

\begin{figure}[t]
	
	\centering
	\begin{tikzpicture}[scale=1, auto, >=stealth']
		
		\colorlet{effectsColor}{orange}
		
		\def\blockHeight{2\baselineskip}
		\node[block, minimum height=\blockHeight]  at (0,0)  (plant) {\parbox{8em}{\hfill $x$\\}};
		\node[]  at (plant)  (plantText) {\centering Plant};
		\node[block, minimum height=\blockHeight, below = 1.3\baselineskip of plant] (controller) {\parbox{8em}{\hfill $x_c$\\}};
		\node[]  at (controller)  (controllerText) {\centering Controller};

		\node[guide, left=0.5cm of plant] (guideLeft) {};
		\node[guide, right=0.5cm of plant] (guideRight) {};
		\node[guide, left=2cm of plant] (guideLeftLeft) {};
		\node[guide, right=2cm of plant] (guideRightRight) {};

		\draw[connector]  ($(controller.west)+(0,+0.6\baselineskip)$) -| node[pos=0.75,left]{$u$} ($(plant.west)+(-\lineShift,-0.6\baselineskip)$) -- node[effectsColor, pos=0,anchor=mid, yshift=-0.1cm, xshift=0.11cm] (unlocku) {\huge\faUnlock}  ($(plant.west)+(0,-0.6\baselineskip)$);
		\draw[connector]  ($(plant.east)+(0,-0.6\baselineskip)$) -- node[effectsColor, pos=1,anchor=mid, yshift=-0.1cm] (locky) {\huge\faLock} ($(plant.east)+(\lineShift,-0.6\baselineskip)$) |- node[pos=0.25,right]{$y$} ($(controller.east)+(0,+0.6\baselineskip)$);

		\draw[connector]  ($(controller.west)+(0,-0.6\baselineskip)$) -| node[pos=0.75,left]{\color{istred}$x_c^+$} ($(plant.west)+(-\lineShiftDouble,+0.6\baselineskip)$) --node[effectsColor, pos=0,anchor=mid, yshift=-0.1cm, xshift=0.11cm] (unlockx) {\huge\faUnlock} ($(plant.west)+(0,+0.6\baselineskip)$);
		\draw[connector]  ($(plant.east)+(0,+0.6\baselineskip)$) -- node[effectsColor, pos=1,anchor=mid, yshift=-0.1cm] (lockx) {\huge\faLock} ($(plant.east)+(\lineShiftDouble,+0.6\baselineskip)$) |- node[pos=0.25,right]{\color{istblue}$x_c$} ($(controller.east)+(0,-0.6\baselineskip)$);

		\begin{scope}[on behind layer]
			\node[cloud, cloud puffs=17, cloud ignores aspect, align=center, draw, fill = white!80!gray,fit=(controller),inner sep=0.0em] (Controller2){};
		\end{scope}
		
	\end{tikzpicture}
	\caption{Plant supplies cloud with "fresh" encryption of $x_c$}
	\label{fig:re-encryption}
\end{figure}
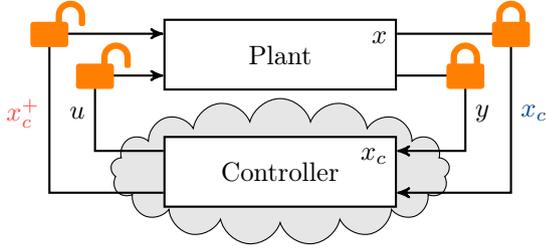

\subsubsection{Resetting controller}\label{sec:reset}
Instead of turning the dynamic controller into a entirely static controller by re-encryption, it was proposed to use the dynamic controller as long as encrypted updates are possible~\cite{Murguia2020}. If the cryptosystem does not allow for further control updates, the controller state is reset to a fresh encryption of zero. Thereby, the controller can continue to run and to provide control inputs; however, with every state-reset, a transient phase is introduced, which reduces control performance. 

The advantages and disadvantages of resetting controllers are that\mynobreakpar
\begin{itemize}
	\itemindent=13pt
	\item[\istgreencheck] no communication is needed,
	\item[\istredcross] repeated transient phases reduce the control performance,
	\item[\istredcross] only strongly stabilizable systems can be controlled.
\end{itemize}

\subsubsection{FIR controller}\label{sec:FIR}
In~\cite{Schluter2021}, a finite-impulse-response (FIR) controller was proposed as an approximation of the dynamic controller that possibly resembles an infinite-impulse-response (IIR) system.
FIR controllers solve the problem of limited encrypted multiplications as their controller state only experiences a finite number of updates or time-shifts, respectively. Every output of the FIR controller consists only of a combination of a finite number of inputs.
In~\cite{Schluter2021}, it was also observed that the previously proposed reset controller can be viewed as a special case of an FIR controller with a periodically growing and resetting horizon. Thus, an FIR controller with constant horizon can lead to better control performance than a resetting controller with the same maximum horizon.
Since FIR controllers are inherently stable, the plant has to be strongly stabilizable, i.e., stabilizable by a stable controller~\cite{Adamek2024a}.
Design methods for FIR controllers were discussed in~\cite{Adamek2024a}.
If an FIR controller leads to an acceptable performance in closed-loop, this is the easiest method to circumvent the problems of dynamic encrypted control, as no specific changes to the implementation have to be made.

The advantages and disadvantages of FIR controllers are that\mynobreakpar
\begin{itemize}
	\itemindent=13pt
	\item[\istgreencheck] no communication is needed,
	\item[\istgreencheck] better performance than with resetting controller can be achieved,
	\item[\istredcross] only approximation of IIR controllers can be applied,
	\item[\istredcross] only strongly stabilizable systems can be controlled.
\end{itemize}

\subsection{Enabling unlimited multiplications: Bootstrapping}\label{sec:Bootstrapping}

The cryptographic solution to extending the limited number of encrypted multiplications is called bootstrapping and was introduced by~\cite{gentry2009fully}. 
It enables modern homomorphic cryptosystems such as CKKS to having an unlimited number of multiplications~\cite{Cheon2018a}.
While the original idea resembled basically the same approach as in Section~\ref{sec:re-enc} but in a purely encrypted fashion, bootstrapping in CKKS relies on a so-called modular reduction, where in theory, the modulo function $\modq$ is evaluated to remove from the ciphertext an unknown number of overflows $rq$ for an $r\in\Z$. 
Since encrypted computations only allow for polynomials to be evaluated and the modulo function requires a piecewise evaluation, in practice, a polynomial approximation of the modulo function in the relevant areas is used in this step. An overview of the different approaches is given in~\cite{Badawi2023,Marcolla2022a}. This necessary approximation, however, introduces errors into the ciphertexts. In Figure~\ref{fig:modPoly}, a part of the required modulo function with its polynomial approximation according to~\cite{schlor24a} is depicted. Figure~\ref{fig:sector} shows the errors between the modulo function and the polynomial, where the different sections of the modulo function are overlayed.
The drawback of this method is the high computational demand, which is why bootstrapping is in the order of seconds, and that errors are introduced into the encrypted values by the approximation of the modulo function by the bootstrapping polynomial.
In~\cite{Kim2016}, timing aspects of bootstrapping were considered and an orchestrated set of redundant controllers was used to bridge the time where one of the controllers performs bootstrapping on its state.
For taming the bootstrapping errors, in~\cite{schlor24a} a robust analysis approach was introduced that guarantees stability and performance despite the possible errors during bootstrapping.

\begin{figure}[t]
	\centering
	\setlength\figurewidth{0.8\columnwidth}
	\setlength\figureheight{0.3\columnwidth}
	\input{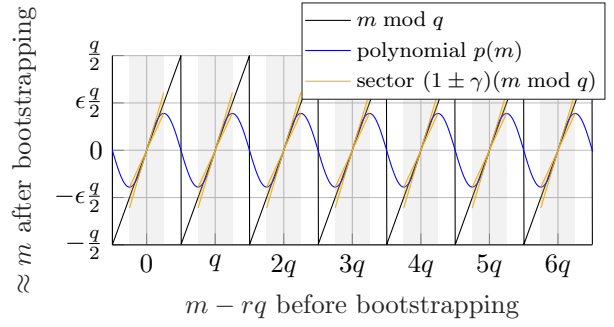} 
	\caption{The modulo function and its polynomial approximation for bootstrapping.}\label{fig:modPoly}
\end{figure}

\begin{figure}[t]
	\centering
	\setlength\figurewidth{0.8\columnwidth}
	\setlength\figureheight{0.3\columnwidth}
	\input{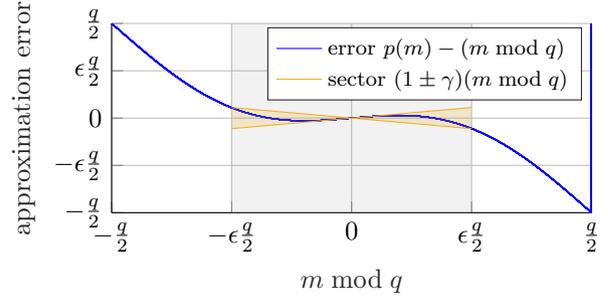} 
	\caption{Relative error of the polynomial approximation to the modulo function for bootstrapping.
		The figure shows multiple error functions since the bootstrapping polynomial in Fig.~\ref{fig:modPoly} is evaluated at different intervals depending on the offset $rq$.
	}\label{fig:sector}
\end{figure}

The advantages and disadvantages of bootstrapping controllers are that\mynobreakpar
\begin{itemize}
	\itemindent=13pt
	\item[\istgreencheck] no communication is needed,
	\item[\istgreencheck] IIR controllers can be applied,
	\item[\istredcross] bootstrapping is computationally demanding.
\end{itemize}

\subsection{Avoiding one of the problems with unlimited multiplications: Integer reformulations}\label{sec:integer}
In this category, we group approaches that reformulate the dynamic controller as a controller with an integer system matrix. With this trick, the encrypted implementation can work without a scaling factor, or the scaling factor is equal to one, respectively. Thus, by repeatedly multiplying a scaling factor of one, no overflows are generated. 
That controllers with integer coefficients are favorable was observed in~\cite{Cheon2018b}, and integer approximations were proposed.
The following approaches implement the nominal controller with an integer implementation using artificial feedback with the plant.

\subsubsection{Integer dynamics by feedback: controller type}
Proposed by~\cite{Kim2021}, the original controller dynamics matrix $A_c$ consisting of fixed-point numbers can be replaced by a system matrix $A_{\Z}$ with integer entries if an additional feedback between the controller and the plant is introduced. The methodology behind this approach is classical pole-placement by state feedback. 
As depicted in Figure~\ref{fig:transcontroller}, the controller sends the artificial output $K x_c$ in addition to the control input $u$ to the plant. The plant decrypts both signals, encrypts $y+K x_c$ again, and sends it to the cloud.
With this procedure, the eigenvalues of the system matrix $A_{\Z}$ can be placed on algebraic integers, which are the roots of monic polynomicals with integer coefficients, by the additional feedback $K x_c$ as
\begin{equation*}
	\begin{aligned}
	x_c(t+1) &= A_cx_c(t) + B_c y(t) + B_c K x_c(t) -B_c K x_c(t)\\
	&= \underbrace{(A_c-B_c K)}_{ A_{\Z} } x_c(t) + B_c(y(t)+K x_c(t))\\
	&= A_{\Z} x_c(t) + B_c(y(t)+K x_c(t)).
\end{aligned}
\end{equation*}
Please note that we have omitted the performance channel here in favor of a more compact representation.
After a state transformation, the matrix $A_{\Z}$ only contains integers and no scaling factor is needed to encode the entries for encryption.
Thus, the possibly unstable or FIR controller with dynamics matrix $A_{\Z}$ is applied and the artificial feedback by the plant leads to the desired closed-loop behavior.

\begin{figure}[t]
	
	\centering
	\begin{tikzpicture}[scale=1, auto, >=stealth']
		
		\colorlet{effectsColor}{orange}
		
		\def\blockHeight{2\baselineskip}
		\node[block, minimum height=\blockHeight]  at (0,0)  (plant) {\parbox{8em}{\hfill $x$\\}};
		\node[]  at (plant)  (plantText) {\centering Plant};
		\node[block, minimum height=\blockHeight, below = 1.3\baselineskip of plant] (controller) {\parbox{8em}{\hfill $x_c$\\}};
		\node[]  at (controller)  (controllerText) {\centering Controller};

		\node[guide, left=0.5cm of plant] (guideLeft) {};
		\node[guide, right=0.5cm of plant] (guideRight) {};
		\node[guide, left=2cm of plant] (guideLeftLeft) {};
		\node[guide, right=2cm of plant] (guideRightRight) {};
		
		\draw[connector]  ($(controller.west)+(0,+0.6\baselineskip)$) -| node[pos=0.75,left]{$u$} ($(plant.west)+(-\lineShift,-0.6\baselineskip)$) -- node[effectsColor, pos=0,anchor=mid, yshift=-0.1cm, xshift=0.11cm] (unlocku) {\huge\faUnlock}  ($(plant.west)+(0,-0.6\baselineskip)$);
		\draw[connector]  ($(plant.east)+(0,-0\baselineskip)$) -- node[effectsColor, pos=1,anchor=mid, yshift=-0.1cm] (locky) {\huge\faLock} ($(plant.east)+(\lineShift,-0\baselineskip)$) |- node[pos=0.25,right]{$y+K x_c$} ($(controller.east)+(0,+0\baselineskip)$);

		\draw[connector]  ($(controller.west)+(0,-0.6\baselineskip)$) -| node[pos=0.75,left]{$Kx_c$} ($(plant.west)+(-\lineShiftDouble,+0.6\baselineskip)$) --node[effectsColor, pos=0,anchor=mid, yshift=-0.1cm, xshift=0.11cm] (unlockx) {\huge\faUnlock} ($(plant.west)+(0,+0.6\baselineskip)$);
		\begin{scope}[on behind layer]
			\node[cloud, cloud puffs=17, cloud ignores aspect, align=center, draw, fill = white!80!gray,fit=(controller),inner sep=0.0em] (Controller2){};
		\end{scope}
		
	\end{tikzpicture}
\caption{Pole shift by $K x_c$ (and transformation) such that $A_{\Z}$ is an integer matrix.}
\label{fig:transcontroller}
\end{figure}

\subsubsection{Integer dynamics by feedback: observer type}
We observe that a similar effect can be obtained if pole-placement via an observer-type feedback is introduced.
In this case, the controller only sends the control input $u$ to the plant, but the plant computes $Lu$, and sends it back to the controller with the measurement $y$ as depicted in Figure~\ref{fig:transobserver}.
By this additional feedback, similarly to the previous approach, the eigenvalues of $A_{\Z}$ can be placed on algebraic integers with $u(t)=C_c x_c(t)$ as
\begin{equation*}
	\begin{aligned}
	x_c(t+1) &= A_cx_c(t) + B_c y(t) - L C_c x_c(t) +Lu(t),\\% \qquad \text{with~} u(t)=C_c x_c(t)\\
	&= \underbrace{(A_c-L C_c)}_{ A_{\Z} } x_c(t) + B_cy(t)+Lu(t)\\
	&= A_{\Z} x_c(t) + B_cy(t)+Lu(t).
\end{aligned}
\end{equation*}
Depending on the input and output dimensions, the controller or observer type integer reformulation can be favorable.

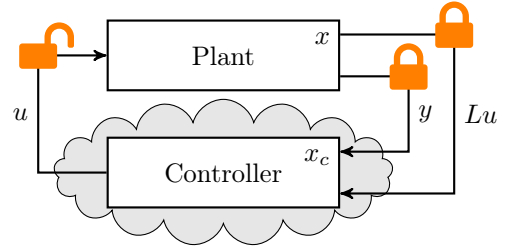
\begin{figure}[t]
	
	\centering
	\begin{tikzpicture}[scale=1, auto, >=stealth']
		
		\colorlet{effectsColor}{orange}
		
		\def\blockHeight{2\baselineskip}
		\node[block, minimum height=\blockHeight]  at (0,0)  (plant) {\parbox{8em}{\hfill $x$\\}};
		\node[]  at (plant)  (plantText) {\centering Plant};
		\node[block, minimum height=\blockHeight, below = 1.3\baselineskip of plant] (controller) {\parbox{8em}{\hfill $x_c$\\}};
		\node[]  at (controller)  (controllerText) {\centering Controller};

		\node[guide, left=0.5cm of plant] (guideLeft) {};
		\node[guide, right=0.5cm of plant] (guideRight) {};
		\node[guide, left=2cm of plant] (guideLeftLeft) {};
		\node[guide, right=2cm of plant] (guideRightRight) {};
		
		\draw[connector]  ($(controller.west)+(0,+0\baselineskip)$) -| node[pos=0.75,left]{$u$} ($(plant.west)+(-\lineShift,-0\baselineskip)$) -- node[effectsColor, pos=0,anchor=mid, yshift=-0.1cm, xshift=0.11cm] (unlocku) {\huge\faUnlock}  ($(plant.west)+(0,-0\baselineskip)$);
		\draw[connector]  ($(plant.east)+(0,-0.6\baselineskip)$) -- node[effectsColor, pos=1,anchor=mid, yshift=-0.1cm] (locky) {\huge\faLock} ($(plant.east)+(\lineShift,-0.6\baselineskip)$) |- node[pos=0.25,right]{$y$} ($(controller.east)+(0,+0.6\baselineskip)$);

		\draw[connector]  ($(plant.east)+(0,+0.6\baselineskip)$) -- node[effectsColor, pos=1,anchor=mid, yshift=-0.1cm] (lockx) {\huge\faLock} ($(plant.east)+(\lineShiftDouble,+0.6\baselineskip)$) |- node[pos=0.25,right]{$Lu$} ($(controller.east)+(0,-0.6\baselineskip)$);

		\begin{scope}[on behind layer]
			\node[cloud, cloud puffs=17, cloud ignores aspect, align=center, draw, fill = white!80!gray,fit=(controller),inner sep=0.0em] (Controller2){};
		\end{scope}
		
	\end{tikzpicture}
\caption{Pole shift by $Lu$ (and transformation) such that $A_{\Z}$ is integer matrix.}
\label{fig:transobserver}
\end{figure}

The advantages and disadvantages of integer controllers by feedback are that\mynobreakpar
\begin{itemize}
	\itemindent=13pt
	\item[\istgreencheck] the exact controller is applied,
	\item[\istgreencheck] less communication than by state re-encryption is needed,
	\item[\istredcross] additional communication is needed,
		\item[\istredcross] the wanted controller is only achieved by feedback with the plant.
\end{itemize}

\subsubsection{Integer conversion by scaling}

Different from the approaches for integer reformulation using feedback, in~\cite{Lee2023b,Lee2025}, a non-minimal realization of controllers is found that has integer coefficients. The iterative design method for these type of controllers does not require an artificial feedback with the plant.
However, it introduces stable pole-zero cancellations and typically requires a much larger dimension of the converted controller matrices.

{The advantages and disadvantages of integer controllers by scaling are that\mynobreakpar
\begin{itemize}
	\itemindent=13pt
	\item[\istgreencheck] the exact controller is applied,
	\item[\istgreencheck] no additional communication is needed,
	\item[\istredcross] a non-minimal realization is applied.
\end{itemize}}

\section{Stability and performance analysis} \label{sec:analysis}

After listing the available approaches to encrypted dynamic control, we analyze the performance of the different control strategies.

\subsection{Baseline for nominal system and exact methods}
First, we show how the baseline performance can be determined that is achieved by the nominal controller. Neglecting quantization errors, the exact encrypted dynamic control approaches in Sections~\ref{sec:re-enc} and~\ref{sec:integer} also achieve this performance.

The interconnection of the plant~\eqref{eq:plant} and the controller~\eqref{eq:controller} with joint state $\xi = \begin{pmatrix}
	x \\ x_c
\end{pmatrix}$ 
results in the closed-loop system representation
\begin{align}\label{eq:clsys}
	\left(\begin{array}{c} 
		\xi(t+1) \\ \hline  z_p(t)
	\end{array}\right)
	&=
	\left(\begin{array}{c | c } 
		\Abc 	& \Bbc_p \\ 
		\hline  
		\Cbc_p & \Dbc_{pp}
	\end{array}\right)
	\left(\begin{array}{c} 
		\xi(t)\\ \hline w_{p}(t)
	\end{array}\right)
\end{align}
with
\begin{multline*}
	\left(\begin{array}{c | c } 
		\Abc 	& \Bbc_p \\ 
		\hline  
		\Cbc_p & \Dbc_{pp} 
	\end{array}\right)
	=\\
	\left(\begin{array}{cc | c c } 
		A + B D_c C  & B C_c 	& B_1 + BD_cF_1 & BF_2 		 \\ 
		B_c C & A_c 				&  B_cF_1 & B_2 					 \\ 
		\hline  
		C_1 + ED_cC & EC_c 		& D_1 + ED_cF_1 & EF_2 		
	\end{array}\right).
\end{multline*}

Then, by applying well-known results from control theory~\cite[Prop.\ 3.9]{scherer2000linear}, we can obtain the following theorem.
\begin{theorem}{\cite[Prop.\ 3.9]{scherer2000linear}}\label{thm:1}
	The encrypted closed-loop system~\eqref{eq:clsys} satisfies quadratic performance with performance index $P_p = 	
	\begin{pmatrix} 
		Q_p & S_p\\
		S_p^\top & R_p
	\end{pmatrix} $ with $R_p \succeq 0$, if 
	there exist $X \succ 0$ such that
	\begin{align*}
		(\star)^\top  \!
		\begin{pmatrix} 
			-X & 0\\
			0 & X
		\end{pmatrix} 
		\begin{pmatrix} 
			I & 0 \\
			\Abc & \Bbc_p 
		\end{pmatrix} 
		+
		(\star)^\top  \!
		P_p
		\begin{pmatrix} 
			0 & I \\
			\Cbc_p & \Dbc_{pp} 
		\end{pmatrix}  &{}
	\!
		\prec 0.
	\end{align*}
	\vspace*{-0.8\baselineskip}
\end{theorem}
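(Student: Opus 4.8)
The plan is to recognize Theorem~\ref{thm:1} as a discrete-time dissipativity statement and to prove it by exhibiting $V(\xi) = \xi^\top X \xi$ as a quadratic storage function for the closed loop~\eqref{eq:clsys}. First I would left- and right-multiply the displayed LMI by the stacked vector $\begin{pmatrix}\xi(t)^\top & w_p(t)^\top\end{pmatrix}^\top$ and its transpose. Along trajectories of~\eqref{eq:clsys} one has $\begin{pmatrix} I & 0 \\ \Abc & \Bbc_p\end{pmatrix}\begin{pmatrix}\xi(t)\\w_p(t)\end{pmatrix} = \begin{pmatrix}\xi(t)\\\xi(t+1)\end{pmatrix}$ and $\begin{pmatrix} 0 & I \\ \Cbc_p & \Dbc_{pp}\end{pmatrix}\begin{pmatrix}\xi(t)\\w_p(t)\end{pmatrix} = \begin{pmatrix}w_p(t)\\z_p(t)\end{pmatrix}$, so the congruence turns the two $(\star)^\top$ terms into $V(\xi(t+1)) - V(\xi(t))$ and the supply rate $\begin{pmatrix}w_p(t)\\z_p(t)\end{pmatrix}^\top P_p \begin{pmatrix}w_p(t)\\z_p(t)\end{pmatrix}$, respectively. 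The strict LMI thus encodes the dissipation inequality
\[
V(\xi(t+1)) - V(\xi(t)) + \begin{pmatrix}w_p(t)\\z_p(t)\end{pmatrix}^\top P_p \begin{pmatrix}w_p(t)\\z_p(t)\end{pmatrix} < 0
\]
valid along every trajectory.

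Second, I would convert the strictness $\prec 0$ into a quantitative margin. Since the LMI matrix is negative definite, for sufficiently small $\epsilon > 0$ it stays negative semidefinite after adding $\epsilon I$ to its $w_p$-block, which upgrades the dissipation inequality to $V(\xi(t+1)) - V(\xi(t)) + \begin{pmatrix}w_p(t)\\z_p(t)\end{pmatrix}^\top P_p \begin{pmatrix}w_p(t)\\z_p(t)\end{pmatrix} \le -\epsilon\, w_p(t)^\top w_p(t)$. Summing this telescoping inequality from $t=0$ over a finite horizon $T$, using $\xi(0)=0$ so that $V(\xi(0))=0$, and discarding the nonnegative term $V(\xi(T+1)) \ge 0$ (here $X \succ 0$ enters), gives the finite-horizon performance bound. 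Letting $T \to \infty$, with the right-hand side convergent because $w_p \in \ell_2$, yields exactly the performance inequality of the definition.

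Third, I would establish asymptotic stability for $w_p = 0$ from the $(1,1)$-block of the LMI, which reads $\Abc^\top X \Abc - X + \Cbc_p^\top R_p \Cbc_p \prec 0$. Because $R_p \succeq 0$ forces $\Cbc_p^\top R_p \Cbc_p \succeq 0$, this implies the strict discrete Lyapunov inequality $\Abc^\top X \Abc - X \prec 0$ with $X \succ 0$; standard Lyapunov theory then shows $\Abc$ is Schur, so~\eqref{eq:clsys} is asymptotically stable for $w_p = 0$. This is precisely where the hypothesis $R_p \succeq 0$ is indispensable.

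I expect the main obstacle to be the rigorous passage from the finite-horizon sum to the infinite-horizon inequality: one must justify interchanging the limit $T \to \infty$ with the summation, which requires controlling the state energy so that $z_p \in \ell_2$ and $\liminf_{T} V(\xi(T+1)) \ge 0$ can be used without the partial sums oscillating. Invoking the internal stability just established, together with $R_p \succeq 0$, makes this legitimate. Everything else is the mechanical congruence expansion and the small-$\epsilon$ perturbation, both routine once the storage-function interpretation is fixed; since the statement is quoted verbatim from~\cite[Prop.\ 3.9]{scherer2000linear}, I would otherwise simply cite it and present the dissipativity computation only for self-containedness.
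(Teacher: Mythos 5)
Your proposal is correct and is essentially the dissipativity argument used in the cited source \cite[Prop.\ 3.9]{scherer2000linear}, which the paper invokes without reproving: congruence of the LMI with the trajectory vector yields the strict dissipation inequality, the $\epsilon$-margin and telescoping sum with $\xi(0)=0$ and $X\succ 0$ give the performance bound, and the $(1,1)$-block together with $R_p\succeq 0$ gives the Lyapunov inequality $\Abc^\top X\Abc - X \prec 0$ for internal stability. Your handling of the limit $T\to\infty$ via the internal stability just established is exactly the standard justification, so no gaps remain.
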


For our analysis of the specific encrypted dynamic control approaches in the next section, we need one more preliminary result.
It concerns an equivalent lifted system description, where the original system is only sampled at time instants $t=kT$ with $T\in\N$ and a new time-index $k\in\N$.

With the lifted state and performance signals
\begin{equation*}
	\begin{aligned}
	\tilde{z}_p(k) &=
	\begin{pmatrix}
		z_p(kT)\\
		\vdots\\
		z_p(kT+(T-1))
	\end{pmatrix}, &
\tilde{\xi}(k) &= \xi(kT),
\\
	\tilde{w}_p(k) &=
	\begin{pmatrix}
		w_p(kT)\\
		\vdots\\
		w_p(kT+(T-1))
	\end{pmatrix},&&
\end{aligned}
\end{equation*}
\noindent the lifted system with time-index $k$ can be represented by
\begin{align}\label{eq:liftSys}
	\left(\begin{array}{c} 
		\tilde{\xi}(k+1) \\ \hline  \tilde{z}_p(k) 
	\end{array}\right)
	&=
	\left(\begin{array}{c | c } 
		\AbcTilde
		& \tilde{\Bbc}_p  \\ 
		\hline  
		\tilde{\Cbc}_p & \tilde{\Dbc}_{pp}
	\end{array}\right)
	\left(\begin{array}{c} 
		\tilde{\xi}(k)\\ \hline \tilde{w}_{p}(k) 
	\end{array}\right)
\end{align}
with
\begin{multline*}
	\left(\begin{array}{c | c } 
		\AbcTilde
		& \tilde{\Bbc}_p  \\ 
		\hline  
		\tilde{\Cbc}_p & \tilde{\Dbc}_{pp}
	\end{array}\right)
	=\\
	\left(\begin{array}{c | cccc  } 
		\Abc^{T} 
		& 
			\Abc^{T-1}\Bbc_p &\dots&\dots & \Bbc_p
		\\ 
		\hline
			\Cbc_p 		&\Dbc_{pp} & 0 & \cdots & 0 \\
			\vdots		& \Cbc_p \Bbc_{p} & \multicolumn{2}{c}{\smash{\raisebox{-0.6\normalbaselineskip}{\diagdots[-25]{3.7\normalbaselineskip}{.5em}\hspace*{0.7em}}}} & \vdots \\
			\vdots		& \vdots & \multicolumn{1}{c}{\smash{\raisebox{-0.2\normalbaselineskip}{\diagdots[-25]{2.5\normalbaselineskip}{.5em}\hspace*{0em}}}} & \multicolumn{1}{c}{\smash{\raisebox{0.9\normalbaselineskip}{\hspace*{1.3em}\diagdots[-25]{2.5\normalbaselineskip}{.5em}}}} & 0 \\
			\Cbc_p \Abc^{T-1}		& \Cbc_p \Abc^{T-2} \Bbc_{p} & \mathclap{\cdots} & \Cbc_p \Bbc_{p} & \Dbc_{pp}
	\end{array}\right).
\end{multline*}

\begin{lemma}{\cite[Lem.\ 3]{Lang2024}}\label{lem:eq}
	The original system~\eqref{eq:clsys} satisfies quadratic performance specified by
	\vspace*{-0.2\baselineskip}
	\begin{align*}
		P_p = 	
		\begin{pmatrix} 
			Q_p & S_p\\
			S_p^\top & R_p
		\end{pmatrix},
	\end{align*}
	if and only if the lifted system~\eqref{eq:liftSys} satisfies quadratic performance specified by \begin{align*}
		\tilde{P}_p = 
		\begin{pmatrix} 
			I_{T}\otimes Q_p & I_{T}\otimes S_p\\
			I_{T}\otimes S_p^\top & I_{T}\otimes R_p
		\end{pmatrix} .
	\end{align*}
\end{lemma}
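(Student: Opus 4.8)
The plan is to prove the equivalence directly from the performance definition (Definition~1), treating its two requirements---internal stability and the dissipation inequality---separately, and exploiting the fact that lifting sets up a bijection between $\ell_2$ signals in the original time index $t$ and $\ell_2$ signals in the lifted index $k$. First I would record the trajectory correspondence: since the lifted realization~\eqref{eq:liftSys} is obtained by unrolling~\eqref{eq:clsys} over $T$ steps, for $\tilde{\xi}(0)=\xi(0)=0$ and for $w_p$ obtained by concatenating the blocks of $\tilde{w}_p$ (and conversely), one has $\tilde{\xi}(k)=\xi(kT)$ and $\tilde{z}_p(k)=(z_p(kT)^\top,\dots,z_p(kT+T-1)^\top)^\top$. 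This identification is exactly why the lifted matrices carry the stated block-Toeplitz form, and I would take it as given from the construction of~\eqref{eq:liftSys}.

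For stability I would use that $\AbcTilde=\Abc^{T}$, so $\sigma(\AbcTilde)=\{\lambda^{T}:\lambda\in\sigma(\Abc)\}$; since $|\lambda^{T}|=|\lambda|^{T}$, the matrix $\Abc$ is Schur if and only if $\AbcTilde$ is Schur, hence asymptotic stability for $w_p=0$ holds for~\eqref{eq:clsys} exactly when it holds for~\eqref{eq:liftSys}. The norm bookkeeping is equally direct: because the blocks $[kT,kT+T-1]$ tile $\N$,
\[
\sum_{k=0}^{\infty}\tilde{w}_p(k)^\top\tilde{w}_p(k)=\sum_{t=0}^{\infty}w_p(t)^\top w_p(t),
\]
so $w_p\in\ell_2$ iff $\tilde{w}_p\in\ell_2$ and the right-hand $\epsilon$-terms of the two dissipation inequalities agree for every fixed $\epsilon>0$.

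The core of the argument is the identity matching the two quadratic supply rates. I would regroup the original infinite sum blockwise and evaluate the inner block sum using the Kronecker structure of $\tilde{P}_p$:
\[
\begin{pmatrix}\tilde{w}_p(k)\\ \tilde{z}_p(k)\end{pmatrix}^{\!\top}\! \tilde{P}_p \begin{pmatrix}\tilde{w}_p(k)\\ \tilde{z}_p(k)\end{pmatrix} = \sum_{j=0}^{T-1}\begin{pmatrix}w_p(kT+j)\\ z_p(kT+j)\end{pmatrix}^{\!\top}\! P_p \begin{pmatrix}w_p(kT+j)\\ z_p(kT+j)\end{pmatrix}.
\]
The key algebraic point is that each factor $I_T\otimes(\cdot)$ is block diagonal, so the supply rate separates into a sum over the offsets $j$ with no coupling between distinct offsets---precisely the per-step structure of the original supply rate. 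Summing over $k$ and invoking the tiling then shows that the two infinite sums in Definition~1 are literally equal, whence one inequality holds with a given $\epsilon$ if and only if the other does. Combined with the stability equivalence, this yields the claim in both directions.

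The main obstacle I anticipate is bookkeeping rather than mathematics: one must check that the vector ordering implicit in $\tilde{P}_p$ (all disturbances stacked first, then all performance outputs, each governed by an $I_T\otimes$ block) genuinely matches the stacking used to define $\tilde{w}_p$ and $\tilde{z}_p$, and that the off-diagonal $I_T\otimes S_p$ block pairs $w_p(kT+j)$ only with $z_p(kT+j)$ of the \emph{same} offset. Verifying this alignment---equivalently, that the natural interleaved ordering and the stacked ordering differ only by a permutation under which $\tilde{P}_p$ acts offset-wise---is the one place where a careless index could break the equivalence; everything else reduces to the eigenvalue-power fact and the blockwise regrouping of an absolutely summable series.
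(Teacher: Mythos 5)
Your proof is correct, and it is essentially the standard argument for this lifting lemma: the paper itself does not reproduce a proof (it cites \cite[Lem.\ 3]{Lang2024}), and the argument there is the same combination of (i) the spectral-mapping fact that $\AbcTilde=\Abc^{T}$ is Schur iff $\Abc$ is, together with boundedness of $\Abc^{j}$ for $j<T$ to pass from convergence along the subsequence $t=kT$ to convergence of the full trajectory, and (ii) the blockwise identity between the supply rates induced by the $I_{T}\otimes(\cdot)$ structure of $\tilde{P}_p$, which makes the two infinite sums in the performance definition literally equal. Your index-alignment caveat is the right thing to check and resolves exactly as you describe, so no gap remains.
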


With this result, we can analyze the system if errors are introduced every $T$ time steps due to bootstrapping approximations or controller state resets.

\subsection{Bootstrapping}

As shown in~\cite{schlor24a}, the bootstrapping error that results from the deviation between the modulo function and the bootstrapping polynomial $p$ can be treated as a static, time-varying uncertainty $\Delta_r$, that depends on the unknown number of overflows $r$, acting on the entire controller state during bootstrapping. 
For the analysis, an uncertainty channel from $z_u= x_c$ to $w_u = \Delta_r(z_u) = p(x_c)-(x_c \modq)$ can be introduced, where the polynomial $p$ is applied component-wise.

 Due to the particular bootstrapping polynomial in~\cite{schlor24a} with relative error bounds, every possible bootstrapping uncertainty $\Delta_r$ can be treated as an element of the set $\boldsymbol{\Delta}$ of all uncertainties satisfying the same relative error bound.
Every sector bounded uncertainty $\Delta$ in the uncertainty set $\boldsymbol{\Delta}$ satisfies 
 \begin{align}\label{eq:sectorP}
 	\begin{pmatrix} 
 		\Delta(z_u) \\ z_u
 	\end{pmatrix}^\top
 	\tau P_u
 	\begin{pmatrix} 
 		\Delta(z_u) \\ z_u
 	\end{pmatrix} &\geq 0
 \end{align}
 with any $\tau>0$ and the multiplier
 \begin{align*}%
 	P_u = 
 	\begin{pmatrix} 
 		-2I & 0 \\
 		0 & 2 \gamma^2 I
 	\end{pmatrix}.
 \end{align*}
The sector bound together with the actual bootstrapping errors is depicted in Figure~\ref{fig:sector}.

This uncertainty description together with Lemma~\ref{lem:eq} yields a performance theorem for dynamic encrypted control with bootstrapping based on robust control theory. 
 \begin{theorem}{\cite[Thm.\ 2]{schlor24a}}\label{thm:2}
 	The encrypted closed-loop system~\eqref{eq:clsys} with the bootstrapping uncertainty~\eqref{eq:sectorP} satisfies robust quadratic performance with performance index $P_p = 	
 	\begin{pmatrix} 
 		Q_p & S_p\\
 		S_p^\top & R_p
 	\end{pmatrix} $ with $R_p \succeq 0$, if 
 	there exist $X \succ 0$ and $\tau>0$  such that
 	\begin{align*}
 		(\star)^\top  
 		\begin{pmatrix} 
 			-X & 0\\
 			0 & X
 		\end{pmatrix} 
 		\begin{pmatrix} 
 			I & 0 & 0\\
 			\AbcTilde & \tilde{\Bbc}_p & \tilde{\Bbc}_u
 		\end{pmatrix}  &{}\\
 		+
 		(\star)^\top  
 		\tilde{P}_p
 		\begin{pmatrix} 
 			0 & I & 0\\
 			\tilde{\Cbc}_p & \tilde{\Dbc}_{pp} & \tilde{\Dbc}_{pu}
 		\end{pmatrix}  &{}\\
 		+
 		(\star)^\top  
 		\tau P_u
 		\begin{pmatrix} 
 			0 & 0 & I\\
 			\Cbc_u & 0 & 0
 		\end{pmatrix}  &{}
 		\prec 0
 	\end{align*}
 	\noindent with
 	\begin{equation*}
 		\tilde{P}_p = 
 		\begin{pmatrix} 
 			I_{T}\otimes Q_p & I_{T}\otimes S_p\\
 			I_{T}\otimes S_p^\top & I_{T}\otimes R_p
 		\end{pmatrix} %
 	\end{equation*}
 and
 \begin{align*}
 	\tilde{\Bbc}_u &= \Abc^{T-1}\Bbc_u,
 	&
 	 \Cbc_u &= \begin{pmatrix}
 		0 & I
 	\end{pmatrix}, 
 \\
 \Bbc_u &= \begin{pmatrix}
 	0\\
 	A_c
 \end{pmatrix}, &
 	\tilde{\Dbc}_{pu} &= \begin{pmatrix}
 		0\\
 		\Cbc_p \Bbc_{u} \\
 		\vdots\\
 		\Cbc_p \Abc^{T-2} \Bbc_u
 	\end{pmatrix}.
 \end{align*}
 	\vspace*{-0.8\baselineskip}
 \end{theorem}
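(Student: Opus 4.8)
The plan is to establish Theorem~\ref{thm:2} as the robust-performance counterpart of the nominal Theorem~\ref{thm:1}, by combining the lifting equivalence of Lemma~\ref{lem:eq} with a full-block S-procedure that absorbs the bootstrapping uncertainty through the multiplier $\tau P_u$. First I would pass to the lifted description~\eqref{eq:liftSys}: since bootstrapping is applied exactly once every $T$ steps, in the lifted time index $k$ the uncertainty acts precisely once per step, which is exactly what makes a single sector constraint usable. I would verify that injecting the error $w_u = \Delta_r(z_u)$ into the controller state at the start of each lifted interval and propagating it through the remaining $T-1$ nominal steps yields precisely the channel data $\tilde{\Bbc}_u = \Abc^{T-1}\Bbc_u$ and $\tilde{\Dbc}_{pu}$, with the read-out $z_u = \Cbc_u \tilde{\xi} = x_c$. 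Establishing that the error enters only through $\Bbc_u=(0;A_c)^{\top}$ and then rides the nominal powers $\Abc^{j}$ along the interval is the technical heart of setting up the correct uncertain lifted system.

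Next I would introduce the storage function $V(\tilde{\xi}) = \tilde{\xi}^{\top} X \tilde{\xi}$ with $X \succ 0$ and read the displayed inequality as the outer-factor (i.e.\ $(\star)^{\top}(\cdot)$) compression of a dissipation inequality evaluated on the stacked vector $(\tilde{\xi};\tilde{w}_p;w_u)$. Substituting the lifted dynamics, the first outer-factor block equals $V(\tilde{\xi}(k+1)) - V(\tilde{\xi}(k))$, the second equals the performance supply rate $(\tilde{w}_p;\tilde{z}_p)^{\top}\tilde{P}_p(\tilde{w}_p;\tilde{z}_p)$, and the third equals the sector form $(w_u;z_u)^{\top}\tau P_u(w_u;z_u)$; here I would simply match each factor against the lifted output and state equations. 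Negative definiteness of the LMI then asserts that the sum of these three quadratic forms is strictly negative for every choice of the stacked vector.

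I would then invoke the S-procedure. Because every admissible $\Delta_r\in\boldsymbol{\Delta}$ satisfies the relative sector constraint~\eqref{eq:sectorP}, the third term is nonnegative along any genuine trajectory, so subtracting it preserves strictness and leaves $V(\tilde{\xi}(k+1)) - V(\tilde{\xi}(k)) + (\tilde{w}_p;\tilde{z}_p)^{\top}\tilde{P}_p(\tilde{w}_p;\tilde{z}_p) < 0$. Exploiting strictness to extract a margin $\tilde{\epsilon}>0$ on the $\tilde{w}_p$-block, summing over $k$, telescoping $V$, and using $V\succeq 0$ together with $\tilde{\xi}(0)=\xi(0)=0$ delivers robust quadratic performance of the lifted system uniformly over $\boldsymbol{\Delta}$; asymptotic stability at $\tilde{w}_p=0$ follows from the same inequality using $R_p\succeq 0$. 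Finally, Lemma~\ref{lem:eq} transports this lifted guarantee back to robust quadratic performance of the original closed loop~\eqref{eq:clsys} with index $P_p$.

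The main obstacle I anticipate is not the dissipation or S-procedure machinery, which is routine once the data are assembled, but justifying the uncertain lifted model itself: showing that the bootstrapping error---acting on the entire controller state once per $T$ steps and quantified by the relative error bound of~\cite{schlor24a}---is faithfully encoded by $(\tilde{\Bbc}_u,\Cbc_u,\tilde{\Dbc}_{pu})$, and that~\eqref{eq:sectorP} genuinely covers all of $\boldsymbol{\Delta}$ with the single multiplier $P_u$ and a scalar $\tau>0$. A secondary care point is verifying that this scalar-multiplier S-procedure is a legitimate sufficient certificate here rather than demanding a full-block multiplier.
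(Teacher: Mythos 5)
Your proposal is correct and follows essentially the same route the paper intends: Theorem~\ref{thm:2} is imported from~\cite{schlor24a}, and the surrounding text builds it exactly as you do---lift via Lemma~\ref{lem:eq}, model the once-per-$T$ bootstrapping error as a sector-bounded uncertainty channel with data $(\tilde{\Bbc}_u,\Cbc_u,\tilde{\Dbc}_{pu})$, and combine a quadratic storage function with the S-procedure multiplier $\tau P_u$ in a dissipation inequality. Your two flagged care points are both benign: the channel data follow directly from injecting $w_u$ through $\Bbc_u=(0;A_c)$ at the start of each lifted interval and propagating with $\Abc^{j}$, and the scalar multiplier is legitimate because the theorem only claims sufficiency and~\eqref{eq:sectorP} holds for every $\tau>0$.
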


With this result, performance and stability can be certified for the encrypted control system including bootstrapping.

\subsection{Resetting controllers}

In the original paper~\cite{Murguia2020}, an asymptotic stability analysis was provided. Here, we extend it to quadratic performance.
Resetting the controller state to zero instead of bootstrapping similarly introduces an error in the controller state. Whereas the bootstrapping error is given by the specific deviation of the bootstrapping polynomial from the modulo function, the reset introduces an error of exactly minus the identity.
This error can be captured in the same analysis framework with the multiplier
\begin{align}\label{eq:PReset}
	P_u = 
	\begin{pmatrix} 
		-2I & -2I \\
		-2I & -2I
	\end{pmatrix}.
\end{align}
Thus, only $\Delta(z_u) = -z_u$ satisfies the uncertainty description~\eqref{eq:sectorP}.

Based on Theorem~\ref{thm:2} we can state a performance theorem for dynamic encrypted control with periodic controller state resets. 
\begin{theorem}\label{thm:reset2}
	The encrypted closed-loop system~\eqref{eq:clsys} with a controller state reset every $T$ time-steps satisfies quadratic performance with performance index $P_p = 	
	\begin{pmatrix} 
		Q_p & S_p\\
		S_p^\top & R_p
	\end{pmatrix} $ with $R_p \succeq 0$, 
if the conditions of Theorem~\ref{thm:2} hold with the multiplier $P_u$ from~\eqref{eq:PReset}.
\end{theorem}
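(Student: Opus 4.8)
The plan is to realize the periodic reset as a particular instance of the bootstrapping uncertainty framework of Theorem~\ref{thm:2}, so that its LMI conditions can be reused with only the multiplier changed. Concretely, I would model the reset event as the additive error $w_u = \Delta(z_u)$ acting on the controller state through exactly the same uncertainty channel used for bootstrapping, i.e.\ with $z_u = \Cbc_u \tilde{\xi} = x_c$ and $w_u$ entering the controller update via $\Bbc_u = \begin{pmatrix} 0 \\ A_c \end{pmatrix}$. Since overwriting the controller state by zero means the perturbed state $x_c + w_u$ equals $0$, the reset corresponds to the deterministic map $\Delta(z_u) = -z_u$. Because the reset is triggered once every $T$ steps, it fits the lifted description~\eqref{eq:liftSys}: in the lifted time index the error is injected exactly once per period and propagates through $\tilde{\Bbc}_u = \Abc^{T-1}\Bbc_u$ and $\tilde{\Dbc}_{pu}$, identically to the bootstrapping case.

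The key step is to verify that this deterministic reset map is admissible under the reset multiplier~\eqref{eq:PReset}, i.e.\ that it satisfies the sector bound~\eqref{eq:sectorP}. A direct computation gives
\begin{align*}
	\begin{pmatrix} \Delta(z_u) \\ z_u \end{pmatrix}^\top P_u \begin{pmatrix} \Delta(z_u) \\ z_u \end{pmatrix}
	&= \begin{pmatrix} -z_u \\ z_u \end{pmatrix}^\top
	\begin{pmatrix} -2I & -2I \\ -2I & -2I \end{pmatrix}
	\begin{pmatrix} -z_u \\ z_u \end{pmatrix} \\
	&= -2\norm{-z_u + z_u}^2 = 0,
\end{align*}
so~\eqref{eq:sectorP} holds with equality. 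More importantly, for a general $\Delta$ the same quadratic form equals $-2\norm{\Delta(z_u) + z_u}^2$, which is nonnegative if and only if $\Delta(z_u) = -z_u$. Hence the uncertainty set $\boldsymbol{\Delta}$ characterized by~\eqref{eq:sectorP} with the multiplier~\eqref{eq:PReset} is the singleton containing precisely the reset map, so certifying robust performance over $\boldsymbol{\Delta}$ is here neither conservative nor vacuous---it is exactly performance of the reset closed loop.

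With admissibility established, the remainder follows Theorem~\ref{thm:2} verbatim: the full-block S-procedure and dissipation argument underlying it use only the lifted interconnection and the fact that the admissible $\Delta$ obey the sector constraint with the chosen multiplier, and none of the channel matrices $\Cbc_u$, $\Bbc_u$, $\tilde{\Bbc}_u$, $\tilde{\Dbc}_{pu}$ change. Thus the stated LMIs with $P_u$ from~\eqref{eq:PReset} certify that the lifted system~\eqref{eq:liftSys} satisfies quadratic performance with the lifted index $\tilde{P}_p$, and Lemma~\ref{lem:eq} then transfers this back to quadratic performance of the original closed loop~\eqref{eq:clsys} with index $P_p$. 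I expect the main obstacle to be structural rather than computational: one must argue carefully that the deterministic, once-per-period reset genuinely fits the static, time-varying uncertainty template of Theorem~\ref{thm:2}---in particular that injecting the error at the reset instant and propagating it over the remaining $T-1$ steps is faithfully captured by $\tilde{\Bbc}_u$ and $\tilde{\Dbc}_{pu}$---after which the short sector computation above closes the argument.
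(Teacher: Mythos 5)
Your proposal is correct and follows the same route as the paper: the paper's proof of Theorem~\ref{thm:reset2} is exactly the observation that the reset is the uncertainty $\Delta(z_u)=-z_u$, which is the unique map satisfying~\eqref{eq:sectorP} with the multiplier~\eqref{eq:PReset}, so Theorem~\ref{thm:2} applies verbatim with unchanged channel matrices. Your explicit computation showing the quadratic form equals $-2\norm{\Delta(z_u)+z_u}^2$ is a correct (and slightly more detailed) justification of what the paper states without proof.
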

 \begin{proof}
	 	The proof is a direct consequence of Theorem~\ref{thm:2} with the alternative uncertainty description.
	 \end{proof}
 
By observing that in this case, the reset error is actually known, a nominal performance test can also be used.
For this, we construct the lifted system in such a way that over the horizon of $T$ time-steps the whole system behavior including the reset is captured.
The lifted resetting system with time-index $k$ can then be represented by
\begin{align}\label{eq:liftSysReset}
	\left(\begin{array}{c} 
		\tilde{\xi}(k+1) \\ \hline  \tilde{z}_p(k) 
	\end{array}\right)
	&=
	\left(\begin{array}{c | c } 
		\AbcTilde_r
		& \tilde{\Bbc}_{p}  \\ 
		\hline  
		\tilde{\Cbc}_{p} & \tilde{\Dbc}_{pp}
	\end{array}\right)
	\left(\begin{array}{c} 
		\tilde{\xi}(k)\\ \hline \tilde{w}_{p}(k) 
	\end{array}\right)
\end{align}
with $\tilde{\Bbc}_{p}$, $\tilde{\Cbc}_{p}$, and $\tilde{\Dbc}_{pp}$ as in~\eqref{eq:liftSys}, and
\bgroup \allowdisplaybreaks
\begin{align*}
	\AbcTilde_r &= \Abc^{T-1}\Abc_r, 
	&
	\Abc_r &=  \begin{pmatrix}
		A + B D_c C  & B C_c 	 \\ 
		B_c C & 0 				 
	\end{pmatrix}.
\end{align*}
\egroup

Then, the nominal performance test for the control system with the resetting controller is as follows.
\begin{theorem}\label{thm:reset1}
	The encrypted closed-loop system~\eqref{eq:clsys} with a controller state reset every $T$ time-steps satisfies quadratic performance with performance index $P_p = 	
	\begin{pmatrix} 
		Q_p & S_p\\
		S_p^\top & R_p
	\end{pmatrix} $ with $R_p \succeq 0$, 
	if and only if the lifted system~\eqref{eq:liftSysReset} satisfies quadratic performance specified by \begin{align*}
	\tilde{P}_p = 
	\begin{pmatrix} 
		I_{T}\otimes Q_p & I_{T}\otimes S_p\\
		I_{T}\otimes S_p^\top & I_{T}\otimes R_p
	\end{pmatrix} .
\end{align*}

\end{theorem}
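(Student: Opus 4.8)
The plan is to read the resetting closed loop as a linear $T$-periodic system and then reduce the claim to exactly the lifting equivalence that underlies Lemma~\ref{lem:eq}. First I would make the periodic model precise: within each period the state obeys the nominal transition $\Abc$ at the non-reset steps and the reset-modified transition $\Abc_r$ at the single reset instant, where $\Abc_r$ coincides with $\Abc$ except that the controller self-dynamics $A_c$ is annihilated. Unrolling this over one period $[kT,(k+1)T)$ and sampling at $t=kT$ then produces precisely the lifted LTI description~\eqref{eq:liftSysReset}: the $T-1$ nominal steps composed with the one reset step give the sampled transition $\AbcTilde_r = \Abc^{T-1}\Abc_r$, while the input-to-state, state-to-output and feedthrough maps are inherited unchanged from~\eqref{eq:liftSys}. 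The point worth emphasizing is that, unlike the bootstrapping case, the reset is an exactly known error, so it is absorbed entirely into the constant matrix $\AbcTilde_r$ and no uncertainty multiplier is needed.

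The heart of the argument is identical to Lemma~\ref{lem:eq} and does not use time-invariance of the underlying dynamics. Both sides of the quadratic-performance inequality in the Definition are infinite sums over $t$ that split as $\sum_{k\ge0}\sum_{j=0}^{T-1}$; stacking the $T$ consecutive summands converts the index $P_p$ into the block-diagonal $\tilde P_p = I_T\otimes P_p$ on the left and converts $\epsilon\,w_p^\top w_p$ into $\epsilon\,\tilde w_p^\top\tilde w_p$ on the right. Since the stacking map $(w_p(t))_t\mapsto(\tilde w_p(k))_k$ is a norm-preserving bijection between $\ell_2$ and the lifted $\ell_2$ (and likewise for $z_p,\tilde z_p$ given matching initial conditions $\xi(0)=0$, $\tilde\xi(0)=0$), the performance inequality holds for the resetting system exactly when it holds for~\eqref{eq:liftSysReset} under $\tilde P_p$. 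I would then treat the asymptotic-stability requirement for $w_p=0$ separately: the sampled states satisfy $\tilde\xi(k+1)=\AbcTilde_r\tilde\xi(k)$ and the intra-period trajectory is a fixed finite composition of $\Abc$ and $\Abc_r$, so asymptotic stability of the periodic system is equivalent to Schur stability of $\AbcTilde_r$, i.e.\ to asymptotic stability of the lifted system. Chaining the stability equivalence with the performance equivalence yields the stated \emph{if and only if}.

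I expect the main obstacle to be the bookkeeping in the first step: fixing the exact placement of the reset within the period so that the sampled transition comes out as $\Abc^{T-1}\Abc_r$ while $\tilde\Bbc_p,\tilde\Cbc_p,\tilde\Dbc_{pp}$ remain literally those of~\eqref{eq:liftSys}. This requires verifying that the output samples collected over the period see the nominal dynamics and that the reset contributes only to the transition into the next lifted step; once this representation is nailed down, the remaining reasoning is the lifting argument of Lemma~\ref{lem:eq} applied verbatim to a periodic rather than time-invariant system.
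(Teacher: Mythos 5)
Your argument is correct and is essentially the paper's own proof, which simply invokes Lemma~\ref{lem:eq} and Theorem~\ref{thm:1}: since the reset is a known, periodic modification of the dynamics, it is absorbed into the lifted transition matrix, and the stacking/norm-preservation argument behind Lemma~\ref{lem:eq} applies verbatim because it nowhere uses time-invariance of the underlying recursion. The bookkeeping point you flag is indeed the only delicate spot --- placing the reset at the first transition of each period yields the sampled transition $\Abc^{T-1}\Abc_r$ with $\tilde{\Bbc}_p$ and $\tilde{\Dbc}_{pp}$ unchanged, but the intra-period output rows of $\tilde{\Cbc}_p$ for $j\ge 1$ then read $\Cbc_p\Abc^{j-1}\Abc_r$ rather than $\Cbc_p\Abc^{j}$, a detail that the statement of~\eqref{eq:liftSysReset} in the paper glosses over as well.
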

\begin{proof}
	The proof follows from Lemma~\ref{lem:eq} and Theorem~\ref{thm:1}.
\end{proof}

\subsection{FIR approximations}

There are several approximation methods to obtain an FIR controller from a pre-designed IIR controller. For an overview, see~\cite{Schluter2021}. There are also multiple direct design approaches for FIR controllers, e.g.,~\cite{Adamek2024a}. 
Here, we focus on FIR approximations of the given IIR controller by a window-based truncation method, where the impulse response of the IIR and FIR controller are matched up to a time horizon that is given by the FIR filter order. 
If such an FIR controller is given, a straightforward nominal performance analysis as in Theorem~\ref{thm:1} can be conducted.
Alternatively, an interpretation similar to our approach for bootstrapping or resetting controllers can be taken. For bootstrapping or resetting controllers the actually implemented controller was interpreted as the nominal IIR controller with a particular uncertainty that accounts for the bootstrapping error or the reset.
Here, for an FIR approximation, the same method can be applied. The FIR approximation can be written as the nominal IIR controller, where in every time-step $t$ the effect of the measurement from time-step $t-T$ on the controller state is removed.
The FIR controller dynamics can thus be written as
\begin{align*}
	x_c(t+1) &= A_{c,\text{FIR}} x_c(t) + B_c y(t) + B_2w_{p_2}(t)\\
	&= A_c x_c(t) + B_c y(t) + B_2w_{p_2}(t)\notag\\
	&\phantom{=~} - A_c^N B_c y(t-T)\\
	u(t) &= C_c x_c(t) + D_c y(t) + F_2 w_{p_2}(t),
\end{align*}
or, alternatively, with the extended controller state ${x}_{y,c}(t) = \begin{pmatrix}
	y(t-T)\\
	\vdots\\
	y(t-1)\\
	x_c(t)
\end{pmatrix}$ and the upper shift matrix $U_{n_y}$ with ones on the $n_y$-th superdiagonal 
\begin{equation}
	\begin{aligned}
	{x}_{y,c}(t+1) =&
	\begin{pmatrix}
		U_{n_y} & 0\\
		\begin{pmatrix}
			- A_c^T B_c & 0
		\end{pmatrix} & A_c
	\end{pmatrix}
	{x}_{y,c}(t)\\
	{}&+ \begin{pmatrix}
		0\\
		I\\
		B_c
	\end{pmatrix} y(t) + \begin{pmatrix}
		0\\
		B_2
	\end{pmatrix}w_{p_2}(t)
	\\
	u(t) =&\begin{pmatrix}
		 0 &C_c
	\end{pmatrix} {x}_{y,c}(t) + D_c y(t) + F_2 w_{p_2}(t).
\end{aligned}\label{eq:FIR}
\end{equation}
We note that this is not a minimal realization of the FIR controller but the construction does not require the explicit evaluation of the impulse response and makes the FIR approximation interpretable with respect to the original given IIR controller.
The FIR controller in~\eqref{eq:FIR} is in a form that is also suitable for our stability and performance tests.

\section{Numerical comparison}~\label{sec:numerics}
For the numerical evaluation and comparison of the approaches we take a benchmark system, a model of a chemical batch reactor, introduced by~\cite{Green1995}, which was already used in the related literature~\cite{Murguia2020,Schluter2021,schlor24a, Adamek2024a}.
The discretized system with sampling period $h=0.1$ and an added performance channel is given by

\vspace*{-0.8\baselineskip}
			\begin{align*}
				A &= \begin{pmatrix}
					1.18 & 0 & 0.51 & -0.4\\ -0.05 & 0.66 & -0.01 & 0.06\\ 0.08 & 0.34 & 0.56 & 0.38\\ 0 & 0.34 & 0.09 & 0.85
				\end{pmatrix},\\
				B &= \begin{pmatrix}
					0 & 0.088\\ -0.47 & -0.001\\ -0.21 & 0.24\\ -0.21 & 0.016
				\end{pmatrix},\\
				C &= C_1 = \begin{pmatrix}
					1 & 0 & 1 & -1\\ 0 & 1 & 0 & 0
				\end{pmatrix},\\
				B_1 &= I, ~~
				F_1 = 0, ~~
				E = 0, ~~
				D_1 = 0
				.
			\end{align*}

		\noindent An $\mathcal{H}_\infty$ controller synthesis for the system yields the controller matrices 
		\begin{align*}
			A_c &=\begin{pmatrix} 0.33 & -0.034 & -0.26 & 0.33\\ 0 & 0 & 0 & 0\\ -0.37 & -0.17 & 0.31 & 0.52\\ -0.035 & -0.2 & 0.051 & 0.86 \end{pmatrix},\\
			B_c &= \begin{pmatrix}
				0.49 & 0.03\\ 0 & 0\\ -0.5 & 0.21\\ -0.011 & 0.24 
			\end{pmatrix},\\
			C_c &= \begin{pmatrix}
				-0.045 & -0.022 & 0.039 & 0.07\\ -0.5 & 0.11 & 0.39 & -0.75
			\end{pmatrix},\\
			D_c &= \begin{pmatrix}
				-0.054 & 1.4\\ -3.6 & -0.09
			\end{pmatrix},
		\end{align*}
	\noindent where we omitted the performance input to the controller, i.e., $B_2$ and $F_2$ do not exist.

As a performance measure, we used the $\ell_2$-gain with $Q_p = -\gamma_{\ell_2}^2I,  S_p = 0$, and $R_p = I$. 
To compare the different approaches, we analyzed the nominal closed-loop with Theorem~\ref{thm:1}, the system with bootstrapping with Theorem~\ref{thm:2}, the system with resets using the robust Theorem~\ref{thm:reset2} and the switched nominal Theorem~\ref{thm:reset1}, and the FIR approximation with Theorem~\ref{thm:1}.
For that, we used a common bootstrapping and resetting time, and FIR horizon of $T=10$, and used the sector bound of $\gamma = 0.223$ for the bootstrapping analysis.
The solutions to the LMIs were found using the MOSEK solver~\cite{mosek}, YALMIP~\cite{Lofberg2004} and MATLAB.
In addition, simulations of the respective approaches were conducted over 10.000 time-steps using the same random disturbance as performance input signal. This provides a lower bound on the actual $\ell_2$-gain. The results are given in Table~\ref{tab:results}.

\begin{table}[tb]
	\centering
\begin{tabularx}{0.7\columnwidth}{l|l}
	Approach & Performance\\
	\hline
	Nominal (Thm.\ \ref{thm:1}) &  1.8030\\
	Bootstrapping (Thm.\ \ref{thm:2}) & 2.0332 \\
	Reset (Thm.\ \ref{thm:reset2}) &  1.9508\\
	Reset (Thm.\ \ref{thm:reset1}) & 1.8869 \\
	FIR (Thm.\ \ref{thm:1}) & 1.8819 \\
	Nominal (simulation) &  1.0301\\
	Bootstrapping (simulation) &  1.0302\\
	Reset (simulation) &  1.0311\\
	FIR (simulation) & 1.0303 \\
\end{tabularx}
\caption{Comparison of the $\ell_2$-gains obtained for the approaches.}
\label{tab:results}
\end{table}

The results demonstrate that with these approaches, performance guarantees can be obtained using the theorems in Section~\ref{sec:analysis}. 
For this example, the FIR approximation lead to a better performance guarantee than the resetting controllers, and the nominal performance analysis theorems yield less conservative results than the robust analysis theorems.

\section{Summary and outlook}~\label{sec:Summary}

In this paper, we have presented an overview of existing methods for dynamic encrypted control, structured by their underlying solution approach.
We have shown how bootstrapping, periodically resetting controllers, and FIR controllers can be analyzed using a robust and a nominal performance test, and how the introduced errors can be compared between the different approaches.
Our numerical benchmark highlights the differences in the performance between the approaches, which is, next to the simplicity of the encrypted implementation, relevant for the selection of the algorithm.

Further research in cryptography could focus on more efficient and precise bootstrapping methods, while research in control could analyze further errors due to encryption and quantization, and if other bootstrapping polynomials are used.

\begin{funding}
	Funded by the Deutsche Forschungsgemeinschaft (DFG, German Research Foundation) under Germany's Excellence Strategy -- EXC 2075 -- 390740016 and within grant AL 316/13-2 -- 285825138 and AL 316/15-1 -- 468094890. 
	S.\ Schlor thanks the Graduate Academy of the SC SimTech for its support.
\end{funding}

\bibliographystyle{unsrt}

\bibliography{BibForAT} %

\vskip10pt
\begin{wrapfigure}{l}{25mm} 
	\includegraphics[width=1in,height=1.25in,clip,keepaspectratio]{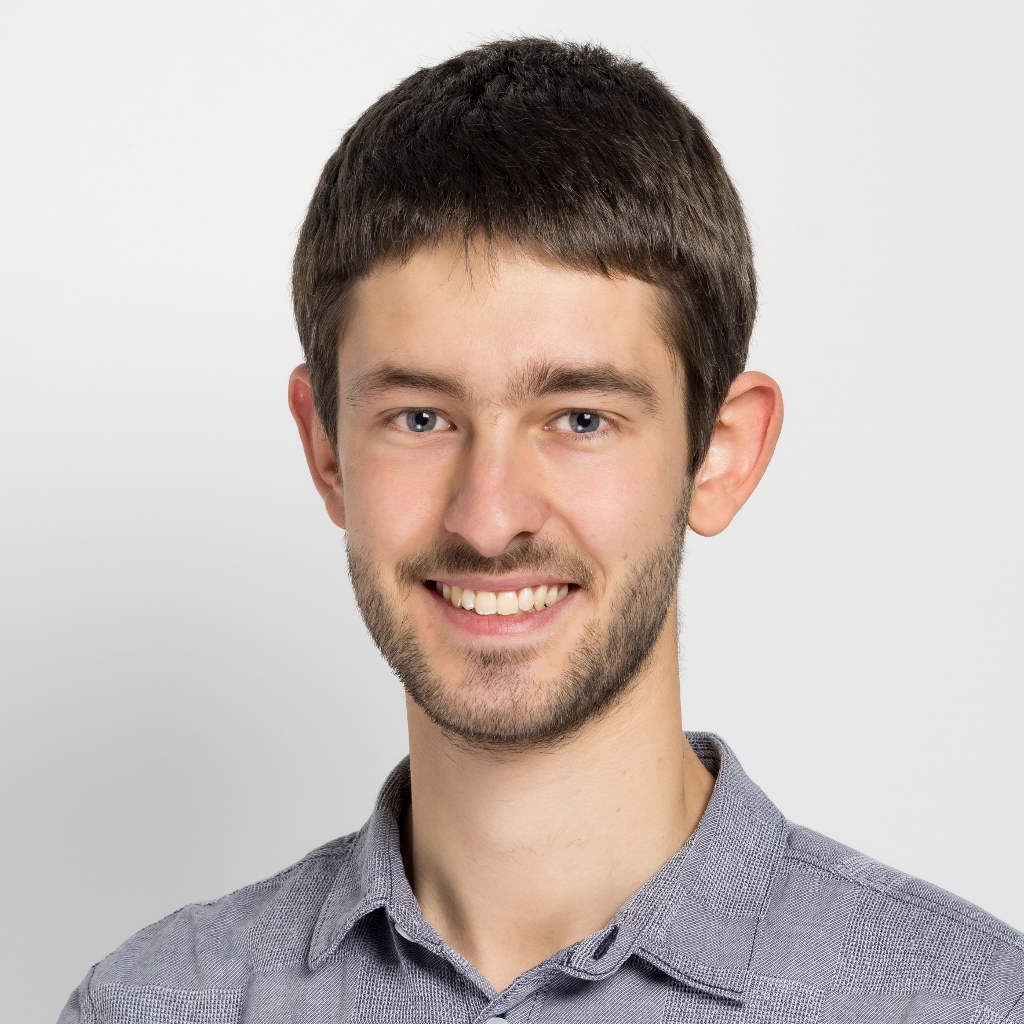}
\end{wrapfigure}\par
\textbf{Sebastian Schlor} received his B.\ Sc.\ and M.\ Sc.\ degree in Engineering Cybernetics from the University of Stuttgart, Stuttgart, Germany, in 2017 and 2020, respectively. Since 2020, he has been a Ph.D.\ student at the University of Stuttgart with the Institute for Systems Theory and Automatic Control under the supervision of Prof.\ Frank Allgöwer.
His research interests include the area of privacy and security of dynamical systems and encrypted control and optimization.
Sebastian Schlor received the Best Student Paper Prize 2025 of the IEEE CSS TC Security and Privacy.\par
\vskip10pt

\begin{wrapfigure}{l}{25mm} 
	\includegraphics[width=1in,height=1.25in,clip,keepaspectratio]{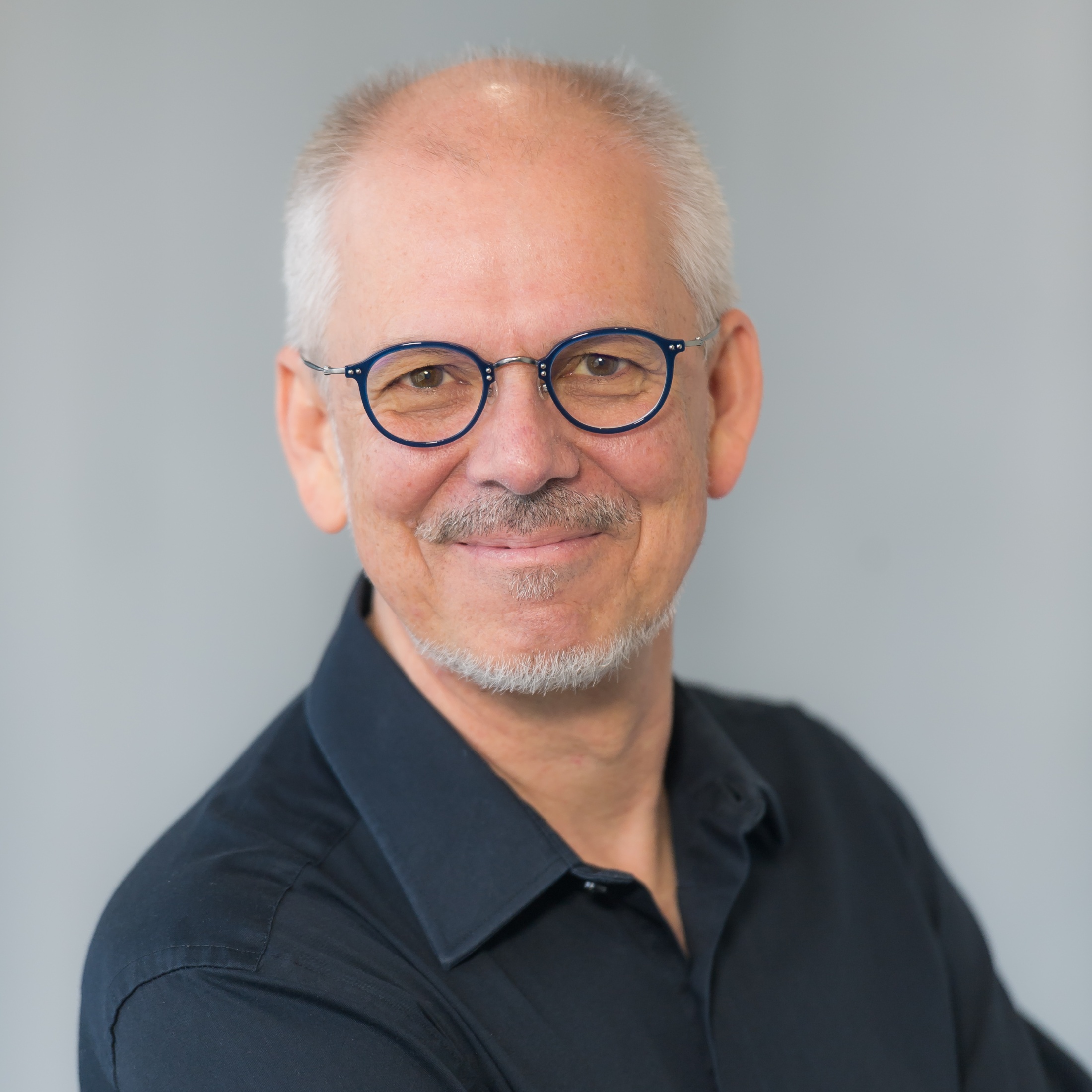}
\end{wrapfigure}\par
\textbf{Frank Allg\"ower} studied engineering cybernetics and applied mathematics in Stuttgart and with the University of California, Los Angeles (UCLA), CA, USA, respectively, and received the Ph.D. degree from the University of Stuttgart, Stuttgart, Germany. Since 1999, he has been the Director of the Institute for Systems Theory and Automatic Control and a professor with the University of Stuttgart. His research interests include predictive control, data-based control, networked control, cooperative control, and nonlinear control with application to a wide range of fields including systems biology. Dr. Allgöwer was the President of the International Federation of Automatic Control (IFAC) in 2017–2020 and the Vice President of the German Research Foundation DFG in 2012–2020.\par
\vskip10pt

\end{document}